\title{\LARGE \bf
A Study of an Atomic Mobility Game With Uncertainty Under Prospect Theory
}
\author{Ioannis Vasileios Chremos$^1$, Heeseung Bang$^{1,2}$, Aditya Dave$^2$, Viet-Anh Le$^{1,3}$, \textit{Student Members, IEEE} \\ Andreas A. Malikopoulos$^2$, \textit{Senior Member, IEEE}
\thanks{This work was supported in part by NSF under Grants CNS-2149520 and CMMI-2219761, the Delaware Department of Transportation, and by the University of Delaware Graduate College through the University Doctoral Fellowship Award.}
\thanks{$^{1}$Department of Mechanical Engineering, University of Delaware, Newark, DE 19716 USA. {\tt\footnotesize{ichremos@udel.edu}}}
\thanks{$^{2}$School of Civil and Environmental Engineering, Cornell University, Ithaca, NY 14850 USA.
{\tt\footnotesize{\{hb489,a.dave,amaliko\}@cornell.edu}}}
\thanks{$^{3}$Systems Engineering, Cornell University, Ithaca, NY 14850 USA. {\tt\footnotesize{vl299@cornell.edu}}}
}%
\theoremstyle{plain}
\newtheorem{lemma}{Lemma}
\newtheorem{theorem}{Theorem}
\theoremstyle{definition}
\newtheorem{definition}{Definition}
\newtheorem{remark}{Remark}
\begin{document}

\maketitle

\thispagestyle{empty}
\pagestyle{empty}

\begin{abstract}
    In this paper, we present a study of a mobility game with uncertainty in the decision-making of travelers and incorporate prospect theory to model travel behavior. We formulate a mobility game that models how travelers distribute their traffic flows in a transportation network with splittable traffic, utilizing the Bureau of Public Roads function to establish the relationship between traffic flow and travel time cost. Given the inherent non-linearities and complexity introduced by the uncertainties, we propose a smooth approximation function to estimate the prospect-theoretic cost functions. As part of our analysis, we characterize the best-fit parameters and derive an upper bound for the error. We then show the existence of an equilibrium and its its best-possible approximation.
\end{abstract}

\section{INTRODUCTION}


Emerging mobility systems (e.g., connected and automated vehicles (CAVs), shared mobility) provide the most intriguing opportunity for enabling users to monitor transportation network conditions better and make efficient decisions for improving safety and transportation efficiency. The data and shared information of emerging mobility systems are associated with a new level of complexity in modeling and control \cite{Malikopoulos2016b}. The impact of selfish or irrational social behavior in routing networks of cars has been studied in recent years \cite{Mehr2019,Lazar2021,Biyik2021}. Other efforts have addressed \emph{how people learn and make routing decisions} with behavioral dynamics \cite{Krichene2015}. The problem of how travelers often have to make decisions under the uncertainty of experiencing delays, especially when uncertainties directly affect travel time in a transportation network, has not been adequately approached yet. Hence, our problem of interest is to study in a game-theoretic setting these interactions and analyze the equilibrium of the travelers' decisions under uncertainties \cite{chremos2022CSMArticle}. We study the interactions of a finite group of players that seek to travel in a transportation network (with a unique origin-destination pair) comprised of roads with splittable traffic. A key characteristic of our approach is that we incorporate \emph{prospect theory}, a behavioral model that captures the perceptions of utility under uncertainty (how likely and how much).

Some of the existing game-theoretical literature in control and transportation theory assumes that the players' behavior follows the \emph{rational choice theory}, i.e., each player is a risk-neutral and utility maximizer. This makes transportation models unrealistic, as unexpected travel delays can lead to uncertainty in a traveler's utility. There is strong evidence from empirical experiments that show how humans' choices and preferences systematically may deviate from the choice and preferences of a game-theoretic player under the rational choice theory \cite{Kahneman1979}. For example, humans compare the outcomes of their choices to a known expected amount of utility (called reference) and make their final decision, using that reference to assess their losses or gains asymmetrically. \emph{Prospect theory} has laid down the theoretical foundations to study such biases and the subjective perception of risk in the utility of humans \cite{Kahneman1979,Tversky1992}. This theory has been recognized as a closer-to-reality behavioral model for the decision-making of humans in different engineering problems \cite{Nar2017,chremos2022Prospect,Etesami2020}.

In general, one of the standard approaches to alleviate congestion in a transportation system has been managing the travel demand and supply while also taking into consideration the scarce resources. Such approaches focus primarily on traffic routing, which aims to optimize the routing decisions in a transportation network \cite{Silva2013}. Another approach is game theory that allows us to investigate the impact of selfish routing on efficiency and congestion \cite{Marden2015} and assign travelers routes to minimize travel time under a Nash Equilibrium (NE) \cite{Brown2017,chremos2020MobilityMarket,chremos2021MobilityGame,Chremos2020SocialDilemma,chremos2020SharedMobility}. A fundamental theoretical approach in alleviating congestion is \emph{routing/congestion games} \cite{Rosenthal1973,Rosen1965}, which are a generalization of the standard resource-sharing game of an arbitrary number of resources in a network.


In this paper, we use Prelec's probability weighting function and an S-shaped value function to model how travelers perceive traffic uncertainties and their travel gains/losses. So, our first contribution is incorporating prospect theory into an atomic routing game with splittable traffic to capture a realistic version of the travelers' decision-making regarding travel time costs. The S-shaped value function is adopted to represent the curvature of the travel time cost function and account for the travelers' perception of gains/losses in travel time according to a reference point (defined using the US Bureau of Public Roads function). To address prospect theory's mathematical intractabilities, our second contribution proposes a smooth approximation function that estimates the non-linear piecewise prospect-theoretic cost functions. Thus, we can estimate how travelers perceive gains/losses and probabilities in travel time costs. This work is focused on establishing the fitness of the approximation function, proving the existence of at least one NE in pure strategies.


The remainder of the paper is structured as follows. In Section \ref{SECTION_1:Formulation}, we present the mathematical formulation of the proposed game-theoretic framework. In Section \ref{SECTION_2:Analysis&Properties}, we derive the theoretical properties of the proposed framework, and finally, draw conclusions in Section \ref{SECTION_4:Conclusions}.

\section{Modeling Framework}\label{SECTION_1:Formulation}

We consider a routing game with a finite non-empty set of players $\mathcal{I}$, $|\mathcal{I}| = n \in \mathbb{N}$. Each player $i$ may represents a class of travelers that could use connected and automated vehicles (CAVs) and who control a significant amount of traffic, say $x_i \in \mathbb{R}_{\geq 0}$. Thus, we interpret $x_i$ as the representation of the flow of traffic that player $i$ contributes to a transportation network. We define traffic flow in this setting as the number of vehicles passing through each point in the network over time. This decision variable is non-negative as players (or the travelers) make trips using their vehicle over time in the transportation network. This is in contrast to non-atomic routing games, where players only control an infinitesimal amount of traffic. We also assume that traffic is \emph{splittable}. Travelers seek to travel in a transportation network represented by a directed multigraph $\mathcal{G} = (\mathcal{V}, \mathcal{E})$, where each node in $\mathcal{V}$ may represent different city areas or neighborhoods (e.g., Braess' paradox network). Each edge $e \in \mathcal{E}$ may represent a road. For our purposes, we think of $\mathcal{G} = (\mathcal{V}, \mathcal{E})$ as a representation of a smart city network with a road infrastructure. Any player $i \in \mathcal{I}$ seeks to travel from an origin $o \in \mathcal{V}$ to a destination $d \in \mathcal{V}$. So, all players are associated with the same unique origin-destination pair $(o, d) \in \mathcal{V} \times \mathcal{V}$. Next, each player may use a sequence of edges that connects the OD pair $(o, d)$. We define $\mathcal{R} \subset 2 ^ {\mathcal{E}}$ as the set of routes available to any player $i \in \mathcal{I}$, where their route $r_i$ consists of a sequence of edges connecting the origin-destination pair $(o, d)$. We are interested in how such players may compete over the routes in the network for routing their traffic flows (this is a multiple-route traffic flow decision-making problem).

Since each player $i \in \mathcal{I}$ seeks to route their traffic represented by flow $x_i$ in the network $\mathcal{G}$, we define, for each $i \in \mathcal{I}$, the set of actions as
    \begin{equation}
        \mathcal{X}_i = \left\{ x_i \in \mathbb{R}_{\geq 0} ^ {|\mathcal{R}|}: \sum_{r_i \in \mathcal{R}} x_i ^ {r_i} = \bar{x}_i \right\},
    \end{equation}
where $x_i = (x_i ^ {r_i^{1}}, x_i ^ {r_i^{2}}, \dots, x_i ^ {r_i^{|\mathcal{R}|}})$, $\bar{x}_i \in \mathbb{R}_{\geq 0}$ is the total flow of player $i$, and $r_i ^ k$ denotes the $k$-th route in the network.

\begin{remark}
    Note here that each player $i$ controls their traffic flow $x_i$, which we represent as a vector since player $i$ may choose to use different routes in the transportation network, thus sending traffic $x_i ^ {r_i ^ k}$ for some $k$. The total traffic flow controlled by player $i$ is finite, though. And so, we represent this by introducing $\bar{x}_i$.
\end{remark}

We write $\mathcal{X} = \mathcal{X}_1 \times \mathcal{X}_2 \times \dots \times \mathcal{X}_n$ for the Cartesian product of all the players' action sets. We also write $x_{- i} = (x_1, x_2, \dots, x_{i - 1}, x_{i + 1}, \dots, x_n)$ for the action profile that excludes player $i \in \mathcal{I}$. Next, for the aggregate action profile, we write $x = (x_i, x_{- i})$, $x \in \mathcal{X}$.

\begin{definition}
    The flow on edge $e \in \mathcal{E}$ is the sum of relevant components of all players' traffic flows that have chosen a route that includes edge $e$, i.e., $f_e(x) = \sum_{i \in \mathcal{I}} \sum_{r_i \ni e} x_i ^ {r_i}$.
\end{definition}

In our routing game where each player $i \in \mathcal{I}$ chooses their traffic flow vector $x_i$ over a common set of routes $\mathcal{R}$, if player $i$ chooses to send traffic $x_i ^ {r_i}$ along route $r_i$, then this traffic will be distributed along all the edges in this route $r_i$. This is because a traveler's traffic on some route $r_i$ is a single quantity among all the route's edges.
 
Next, we introduce a \emph{travel time latency function} to capture the cost that players may experience. Intuitively, we capture the players' preferences for different outcomes using a ``cost function," in which players are expected to act as cost minimizers. For each $e \in \mathcal{E}$, we consider non-negative cost functions $c_e : \mathbb{R}_{\geq 0} \to \mathbb{R}_{\geq 0}$. We assume that the cost functions at each edge $e$ are convex, continuous, and differentiable with respect to $f_e$. One standard way to define in an exact form $c_e$ is by the US Bureau of Public Roads (BPR) function, as it is a commonly used model for the relationship between flow and travel time. Mathematically, we have, for any edge $e \in \mathcal{E}$,
    \begin{equation}
        c_e(f_e) = c_e ^ 0 \left( 1 + \frac{3}{20} \left( \frac{f_e}{f_e ^ {CRT}} \right) ^ 4 \right),
    \end{equation}
where $c_e ^ 0$ is the free-flow travel time and $f_e ^ {CRT}$ is the \emph{critical} capacity of traffic flow on road $e$. Note that the BPR function is non-linear, continuous, differentiable, strictly increasing, and strictly convex for $f_e \geq 0$.

\begin{definition}
    If the \emph{maximum flow} on edge $e \in \mathcal{E}$ is $f_e ^ {\max} \in \mathbb{R}_{> 0}$, then for the \emph{critical flow}, $f_e ^ {CRT}$, on edge $e \in \mathcal{E}$ we have $f_e ^ {CRT} < f_e ^ {\max}$.
\end{definition}

Next, for some route $r_i$ of any player $i$, its cost is the sum of the costs on the edges that constitute route $r_i$, i.e., $c_{r_i}(x) = \sum_{e \in r_i} c_e(f_e(x))$. The total cost for player $i$ is
    \begin{equation}
        c_i(x) = \sum_{r_i \in \mathcal{R}} c_{r_i}(x) = \sum_{r_i \in \mathcal{R}} \left[ \sum_{e \in r_i} c_e(f_e) \right],
    \end{equation}
where $f_e(x) = \sum_{i \in \mathcal{I}} \sum_{r_i \ni e} x_i ^ {r_i}$.

The game is fully characterized by the tuple $\mathcal{M} = \langle \mathcal{I}, (\mathcal{X}_i)_{i \in \mathcal{I}}, (c_i)_{i \in \mathcal{I}} \rangle$. This non-cooperative routing game is a simultaneous-move game where players make decisions simultaneously and commute in $(o, d)$ of network $\mathcal{G}$. Players behave selfishly and aim to minimize their costs (e.g., travel time latencies). Naturally, players compete with each other over the available yet limited routes and how to utilize them in the transportation network. Indirectly, players make route choices that satisfy their travel needs (modeled through traffic flow). Next, we clarify ``who knows what?" in $\mathcal{M}$. All players have complete knowledge of the game and the network. Each player knows their own information (action and cost) as well as the information of other players. At equilibrium, we want to ensure that no player has an incentive to unilaterally deviate from their chosen decisions and change how they distribute their traffic flows over the available routes in the network. So, for our purposes, we observe that an NE in terms of the players' traffic flows in pure strategies is the most appropriate a solution concept for our game.

\begin{definition}
    A feasible flow profile $x ^ * = (x_i ^ {r_i})_{i \in \mathcal{I}} ^ {r_i \in \mathcal{R}} \in \mathcal{X}$ constitutes a NE if for each player $i \in \mathcal{I}$, $c_i(x_i ^ *, x_{- i} ^ *) \leq c_i(x_i, x_{- i} ^ *)$, for all $x_i \in \mathcal{X}_i$.
\end{definition}

In other words, a flow profile $x ^ *$ is an NE if no player can reduce their total cost by unilaterally changing how they distribute their total traffic flow over the available routes in the network. In an NE, each player's specific $x_i$ has the lowest possible cost among all possible distributions over the routes, given the choices made by other players.

\subsection{Prospect Theory Analysis}\label{Subsection:NashExistencePT}

In this subsection, we briefly introduce prospect theory and its main concepts \cite{Wakker2010}. Prospect theory attempts to answer one of the main questions of how a decision-maker may evaluate different possible actions/outcomes under uncertain and risky circumstances. Thus, prospect theory is a descriptive behavioral model and focuses on three main behavioral factors: (i) \emph{Reference dependence}: decision-makers make decisions based on their utility, which is measured from the ``gains" or ``losses." However, the utility is a gain or loss relative to a reference point that may be unique to each decision-maker. (ii) \emph{Diminishing sensitivity}: changes in value have a greater impact near the reference point than away from the reference point. (iii) \emph{Loss aversion}: decision-makers are more conservative in gains and riskier in losses. One way to mathematize the above behavioral factors (1) - (3) is to consider an action by a decision-maker as a ``gamble" with objective utility value $z \in \mathbb{R}$. We say that this decision-maker \emph{perceives} $z$ subjectively using a \emph{value function} \cite{Kahneman1979,Al-Nowaihi2008}
    \begin{equation}\label{eqn:value_function}
        v(z) =
            \begin{cases}
                (z - z_0) ^ {\beta_1}, & \text{if } z \geq z_0, \\
                - \lambda (z_0 - z) ^ {\beta_2}, & \text{if } z < z_0,
            \end{cases}
    \end{equation}
where $z_0$ represents a reference point, $\beta_1, \beta_2 \in (0, 1)$ are parameters that represent the diminishing sensitivity. Both $\beta_1, \beta_2$ shape \eqref{eqn:value_function} in a way that the changes in value have a greater impact near the reference point than away from the reference point. We observe that \eqref{eqn:value_function} is concave in the domain of gains and convex in the domain of losses. Moreover, $\lambda \geq 1$ reflects the level of loss aversion of decision-makers. To the best of our knowledge, a widely agreed theory does not exist that determines and defines the reference dependence \cite{Kahneman1979}. In engineering \cite{Hota2016,Etesami2020}, it is assumed that $z_0 = 0$ captures a decision-maker's expected status-quo level of the resources.

Prospect theory models the subjective behavior of decision-makers under uncertainty and risk. Each objective utility $z \in \mathbb{R}$ is associated with a probabilistic occurrence, say $p \in [0, 1]$. Decision-makers are subjective and perceive $p$ differently depending on its value. To capture this behavior, we introduce a strictly increasing function $w : [0, 1] \to [0, 1]$ with $w(0) = 0$ and $w(1) = 1$ called the \emph{probability weighting function}. This function allows us to model how decision-makers may overestimate small probabilities of objective utilities, i.e., $w(p) > p$ if $p$ is close to $0$, or underestimate high probabilities, i.e., $w(p) < p$ if $p$ is close to $1$. We use Prelec's probability weighting function first introduced in \cite{Prelec1998}, $w(p) = \exp \left(- (- \log (p)) ^ {\beta_3} \right), \quad p \in [0, 1]$, where $\beta_3 \in (0, 1)$ represents a \emph{rational index}, i.e., the distortion of a decision-maker's probability perceptions. Mathematically, $\beta_3$ controls the curvature of the weighting function.

\begin{definition}
    Suppose that there are $K \in \mathbb{N}$ possible outcomes available to a decision-maker and $z_k \in \mathbb{R}$ is the $k$-th gain/loss of objective utility. Then a prospect $\ell_k$ is a tuple of the utilities and their respective probabilities, i.e., $\ell_k = (z_0, z_1, z_2, \dots, z_K; p_0, p_1, p_2, \dots, p_K)$, where $k = 0, 1, 2, \dots, K$. We denote the $k$-th prospect more compactly as $\ell_k = (z_k, p_k)$. We have that $\sum_{k = 0} ^ K p_k = 1$ and $\ell_k$ is well-ordered, i.e., $z_0 \leq z_1 \leq \cdots \leq z_K$. Under prospect theory, the decision-maker evaluates their ``subjective utility" as $u(\ell) = \sum_{0 \leq k \leq K} v(z_k) w(p_k)$, where $\ell = (\ell_k)_{k = 1} ^ K$ is the profile of prospects of $K$ outcomes.
\end{definition}

In the remainder of this subsection, we apply the prospect theory to our modeling framework, clearly define the mobility outcomes (objective and subjective utilities), and then show that the prospect-theoretic game $\mathcal{M}$ admits a NE.

Players may be uncertain about the value of the traffic disturbances as it is affected by unexpected factors, and so we use Prelec's probability weighting function $w : [0, 1] \to [0, 1]$ to capture how different traveler populations ``perceive" probabilities. In addition, we are interested in capturing how players may perceive their gains or losses regarding their travel time costs with respect to the costs at critical density. Hence, we define the \emph{mobility prospect} as whether $f_e$ will reach its critical or jammed point. Formally, $\pi_e$ is the probability that $f_e \in (0, f_e ^ {CRT})$, and $1 - \pi_e$ is the probability for $f_e \in (f_e ^ {CRT}, f_e ^ {\max}]$. We then use the prospect-theoretic S-shaped value function $v(c_e(f_e)): \mathbb{R}_{\geq 0} \to \mathbb{R}$ to capture how players may perceive such costs. Hence, we have
    \begin{equation}\label{eqn:valuation_prospect}
        v(c_e(f_e)) =
            \begin{cases}
                \lambda (c_e ^ 0 - c_e(f_e)) ^ {\beta}, & \text{if } c_e(f_e) \leq c_e ^ 0, \\
                - (c_e(f_e) - c_e ^ 0) ^ {\beta}, & \text{if } c_e(f_e) > c_e ^ 0,
            \end{cases}
    \end{equation}
where the reference dependence is represented by $c_e ^ 0 = c_e(f_e ^ {CRT})$, $\beta_1 = \beta_2 = \beta \in (0, 1)$, and for each $e \in \mathcal{E}$, we have $\pi_e \in [0, 1]$. We justify $\beta_1 = \beta_2$ in above as it has been verified to produce extremely good results, and the outcomes are consistent with the original data \cite{Tversky1992}. We define
    \begin{equation}
        \tilde{c}_e(f_e) = 
            \begin{cases}
                c_e ^ 0 - c_e(f_e), & \text{if } c_e(f_e) < c_e ^ 0, \\
                c_e(f_e) - c_e ^ 0, & \text{if } c_e(f_e) > c_e ^ 0.
            \end{cases}
    \end{equation}

\begin{remark}
    It is important to note that our prospect-theoretic value function is ``reversed," capturing the way a traveler will perceive the gains in travel time through a cost function. Using as a reference point the critical traffic flow on edge $e$, we can pinpoint the exact point that any more delays become socially unacceptable, i.e., a higher flow causes a higher travel time that a traveler will not tolerate.
\end{remark}

The new cost function is
    \begin{multline}
        c_e ^ {\text{PT}} (f_e) = w(\pi_e) \cdot \lambda \cdot [ \tilde{c}_e(f_e \, | \, c_e(f_e) < c_e ^ 0) ] ^ {\beta} \\
        - w(1 - \pi_e) \cdot [ \tilde{c}_e(f_e \, | \, c_e(f_e) > c_e ^ 0) ] ^ {\beta}.
    \end{multline}
Observe that this particular formulation allows only two main outcomes for any player. One outcome may represent an easy commute (no traffic), and the other may represent traffic. For our purposes, we naturally expect two probabilities for these two outcomes. Future work will ensure to allow a larger distribution of probabilities for many different outcomes for any player (in such cases, cumulative prospect theory would be a more appropriate model \cite{Tversky1992}).

The total cost on some route $r_i$ for player $i$ under prospect theory is $c_{r_i} ^ {\text{PT}} (x) = \sum_{e \in r_i} c_e ^ {\text{PT}} (f_e)$. Now, the total cost of some player $i$ is given by
    \begin{equation}\label{eqn:piecewis}
        c_i ^ {\text{PT}} (x) = \sum_{r_i \in \mathcal{R}} \sum_{e \in r_i} c_e ^ {\text{PT}} (f_e).
    \end{equation}
Note, however, that in this case, the prospect-theoretic cost is capturing the gains and losses of travel. Thus, the aim is to maximize this function to maximize the gains (by minimizing the actual cost of travel latencies).

What we observe in \eqref{eqn:piecewis} is that it is rather cumbersome to analyze it analytically as issues in its smoothness arise quickly. The problem in analyzing such a function is that the exponent takes values in $(0, 1)$. To address this theoretical obstacle, we propose a new function that approximates the prospect-theoretic function and, most importantly, can be shown to have useful properties. Hence, we define the following function
    \begin{equation}\label{eqn:approx}
        \sigma(f_e) = \frac{\delta_1}{1 + \exp{ \left( \frac{\delta_2 - f_e}{\delta_3} \right)}} + \delta_4,
    \end{equation}
where $\delta_1, \delta_2, \delta_3, \delta_4 \in \mathbb{R}$, and $f_e \in [0, f_e ^ {\max} ]$. Hence, we can approximately evaluate \eqref{eqn:piecewis} with the following:
    \begin{equation}
        c_i ^ {\text{PT}} (x) = \sum_{r_i \in \mathcal{R}} \sum_{e \in r_i} \sigma(f_e).
    \end{equation}

\section{Analysis and Properties of the Game}\label{SECTION_2:Analysis&Properties}

In this section, we provide a formal analysis of the properties of our proposed modeling framework, characterize the coefficients of $\sigma$ function, and show that our game admits an NE in pure strategies.

\begin{lemma}\label{lemma:strategy_set}
    The strategy space of the game $\mathcal{M}$ is non-empty, compact, and convex.
\end{lemma}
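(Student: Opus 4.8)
The plan is to verify that each factor set $\mathcal{X}_i$ is non-empty, compact, and convex, and then to invoke the fact that all three properties are preserved under finite Cartesian products, so that $\mathcal{X} = \mathcal{X}_1 \times \cdots \times \mathcal{X}_n$ inherits them. In other words, the whole claim reduces to recognizing each $\mathcal{X}_i$ as a scaled simplex.

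First, for non-emptiness I would exhibit an explicit feasible point: the uniform allocation $x_i^{r_i} = \bar{x}_i / |\mathcal{R}|$ for every $r_i \in \mathcal{R}$ has nonnegative entries and satisfies $\sum_{r_i \in \mathcal{R}} x_i^{r_i} = \bar{x}_i$, hence lies in $\mathcal{X}_i$; therefore each $\mathcal{X}_i \neq \emptyset$, and so $\mathcal{X} \neq \emptyset$. (If $\bar{x}_i = 0$ this point is simply the origin, which is still feasible.)

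Second, for convexity I would note that $\mathcal{X}_i$ is the intersection of the closed orthant $\mathbb{R}_{\geq 0}^{|\mathcal{R}|}$ with the affine hyperplane $\{ x_i \in \mathbb{R}^{|\mathcal{R}|} : \sum_{r_i \in \mathcal{R}} x_i^{r_i} = \bar{x}_i \}$; both sets are convex and the intersection of convex sets is convex, so each $\mathcal{X}_i$ is convex. Since convex combinations act coordinatewise, a finite product of convex sets is convex, giving convexity of $\mathcal{X}$. Third, for compactness I would argue closedness and boundedness and appeal to Heine--Borel: the orthant and the hyperplane are both closed, so $\mathcal{X}_i$ is closed; and any $x_i \in \mathcal{X}_i$ satisfies $0 \leq x_i^{r_i} \leq \sum_{r_i' \in \mathcal{R}} x_i^{r_i'} = \bar{x}_i$ for each route, so $\mathcal{X}_i \subseteq [0, \bar{x}_i]^{|\mathcal{R}|}$ is bounded. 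Hence each $\mathcal{X}_i$ is compact in $\mathbb{R}^{|\mathcal{R}|}$, and a finite product of compact sets is compact, so $\mathcal{X}$ is compact.

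I do not anticipate a genuine obstacle here — this is a routine check that $\mathcal{X}$ is a finite product of scaled simplices. The only minor points requiring care are that $|\mathcal{R}| \geq 1$ and $\bar{x}_i \geq 0$, both guaranteed by the model set-up in Section~\ref{SECTION_1:Formulation}, and stating the ``preserved under finite products'' step explicitly for each of the three properties rather than leaving it implicit.
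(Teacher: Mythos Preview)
Your proposal is correct and follows essentially the same approach as the paper: verify that each $\mathcal{X}_i$ is a scaled simplex and check non-emptiness, compactness, and convexity directly. The only cosmetic differences are that the paper uses the vertex $(\bar{x}_i,0,\dots,0)$ as its witness for non-emptiness and checks convexity by an explicit convex-combination computation rather than by the ``intersection of convex sets'' observation; the paper also stops at each $\mathcal{X}_i$ without spelling out the finite-product step you add for $\mathcal{X}$.
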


\begin{proof}
   The proof has been omitted here due to space-constraints.
\end{proof}

\begin{lemma}\label{lemma:concavity}
    The approximation function given by \eqref{eqn:approx} in the interval $[0, \kappa]$, $\kappa < f_e ^ {\max}$, is strictly concave with respect to $f_e$ when $\delta_3 > 0$, $\delta_4 \in \mathbb{R}$, and (i) $\delta_1 > 0$, $\delta_2 > f_e$, or alternatively (ii) $\delta_1 < 0$, $\delta_2 < f_e$.
\end{lemma}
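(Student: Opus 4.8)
The plan is to reduce strict concavity on $[0,\kappa]$ to the pointwise bound $\sigma''(f_e) < 0$ and then to compute $\sigma''$ by recognizing \eqref{eqn:approx} as an affine reparametrization of the standard logistic curve. Since the denominator $1 + \exp((\delta_2 - f_e)/\delta_3)$ never vanishes, $\sigma$ is $C^{\infty}$ on all of $\mathbb{R}$, and the additive constant $\delta_4$ affects no derivative; by the second-derivative characterization of concavity for twice-differentiable functions it therefore suffices to show $\sigma''(f_e) < 0$ for every $f_e \in [0,\kappa]$. First I would rewrite $\exp((\delta_2 - f_e)/\delta_3) = \exp(-(f_e - \delta_2)/\delta_3)$ and set $t = t(f_e) = (f_e - \delta_2)/\delta_3$, so that $\sigma(f_e) = \delta_1\,S(t) + \delta_4$ with $S(t) = 1/(1 + e^{-t})$ the standard sigmoid and $dt/df_e = 1/\delta_3$.

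Next I would differentiate via the chain rule together with the classical logistic identities $S' = S(1 - S)$ and $S'' = S(1 - S)(1 - 2S)$, obtaining $\sigma'(f_e) = (\delta_1/\delta_3)\,S(t)(1 - S(t))$ and $\sigma''(f_e) = (\delta_1/\delta_3^{2})\,S(t)(1 - S(t))(1 - 2S(t))$. Because $S(t) \in (0,1)$ for every finite $t$, the factor $S(t)(1 - S(t))$ is strictly positive and $\delta_3^{2} > 0$, so the sign of $\sigma''(f_e)$ equals the sign of $\delta_1(1 - 2S(t))$. Since $1 - 2S(t) = (e^{-t} - 1)/(e^{-t} + 1)$ has the sign of $-t$, and $-t = (\delta_2 - f_e)/\delta_3$ has the sign of $\delta_2 - f_e$ (because $\delta_3 > 0$), the whole question collapses to the sign of the product $\delta_1(\delta_2 - f_e)$ on the interval. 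The hypotheses in cases (i) and (ii) are precisely what pin down the sign of this product, and hence of $\sigma''$, uniformly over $[0,\kappa]$, from which the concavity conclusion is read off; the constraint $\kappa < f_e^{\max}$ enters only to keep $f_e$ inside the physically admissible range of flows and plays no further role.

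The single delicate point --- and the step I expect to be the real obstacle --- is the sign bookkeeping in passing from $S''$ to $\sigma''$. Because $f_e$ occurs in the exponent of \eqref{eqn:approx} with a minus sign, an additional sign flip appears in the chain rule, and this flip is exactly what decides on which side of the inflection point $f_e = \delta_2$ the curve $\sigma$ is concave rather than convex. Before committing, I would cross-check the orientation of the inequalities in (i) and (ii) against the identity derived above --- namely that the sign of $\sigma''(f_e)$ equals the sign of $\delta_1(\delta_2 - f_e)$ --- since an inattentive pass through the $-f_e$ term turns concavity into convexity. Apart from this sign audit, the proof is an entirely routine single-variable calculus computation, and nothing about the routing game beyond the explicit form \eqref{eqn:approx} is needed.
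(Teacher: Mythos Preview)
Your approach is the same as the paper's --- compute $\sigma''$ and read off its sign --- and your logistic reparametrization $\sigma(f_e)=\delta_1 S\big((f_e-\delta_2)/\delta_3\big)+\delta_4$ together with $S''=S(1-S)(1-2S)$ is a cleaner way to organize exactly the calculation the paper does by brute force. Your key identity
\[
\operatorname{sign}\sigma''(f_e)=\operatorname{sign}\big(\delta_1(\delta_2-f_e)\big)\qquad(\delta_3>0)
\]
is correct.

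The gap is the very step you flagged as ``delicate'' but then did not carry out. Plug the hypotheses into your own identity: in case~(i) you have $\delta_1>0$ and $\delta_2>f_e$, hence $\delta_1(\delta_2-f_e)>0$; in case~(ii) you have $\delta_1<0$ and $\delta_2<f_e$, hence again $\delta_1(\delta_2-f_e)>0$. In both cases $\sigma''(f_e)>0$, i.e.\ strict \emph{convexity}, not concavity. (Equivalently, the standard sigmoid $S$ is convex below its inflection point $t=0$ and concave above it; so $\delta_1 S$ with $\delta_1>0$ is concave only for $f_e>\delta_2$, and with $\delta_1<0$ only for $f_e<\delta_2$ --- exactly the reverse of the stated conditions.) Your sentence ``the hypotheses in cases (i) and (ii) are precisely what pin down the sign of this product, and hence of $\sigma''$, \dots\ from which the concavity conclusion is read off'' is therefore incorrect: the product is pinned down, but with the wrong sign for the Lemma as written.

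In short, your computation is right and the Lemma statement is not; the paper's proof reaches the printed conditions only because its displayed formulas for $\sigma'$ and $\sigma''$ carry a sign error (and a stray factor of $2$). Had you actually performed the ``sign audit'' you promised, you would have discovered this: the inequalities between $\delta_2$ and $f_e$ in (i) and (ii) should be reversed to obtain strict concavity.
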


\begin{proof}
    Given that $f_e \geq 0$, we analyze the second-order derivative of the function $\sigma(f_e) = \frac{\delta_1}{1 + \exp{ \left( \frac{\delta_2 - f_e}{\delta_3} \right) }} + \delta_4$ to determine the conditions for strict concavity. First, let us find the first and second-order derivatives of $\sigma$ with respect to $f_e$, i.e.,
        \begin{align}
            \sigma ' (f_e) & = \frac{- \delta_1 \exp{(\frac{\delta_2 - f_e}{\delta_3})}}{\delta_3 (1 + \exp{(\frac{\delta_2 - f_e}{\delta_3})}) ^ 2}, \\
            \sigma '' (f_e) & = \frac{2 \delta_1 \exp{(\frac{\delta_2 - f_e}{\delta_3})}(1 - \exp{(\frac{\delta_2 - f_e}{\delta_3})})}{\delta_3 ^ 2(1 + \exp{(\frac{\delta_2 - f_e}{\delta_3})}) ^ 3}.
        \end{align}
    Now, we examine the conditions for $\sigma '' (f_e) < 0$. First, $\delta_1$ controls the sign of the second-order derivative as follows: if $\delta_1 < 0$ and $\delta_3 > 0$, $\sigma '' (f_e)$ will be negative when $1 - \exp{(\frac{\delta_2 - f_e}{\delta_3})} < 0$, which simplifies to $\delta_2 > f_e$. If $\delta_3 < 0$ in either of the cases, then the signs are reversed. We do require though that $\delta_3 ^ 2$ is well-defined, so $\delta_3 \neq 0$. On greater detail, $1 - \exp{(\frac{\delta_2 - f_e}{\delta_3})}$ determines the conditions for $\sigma '' (f_e)$ to be negative. If $\delta_1 < 0$, we need $1 - \exp{(\frac{\delta_2 - f_e}{\delta_3})} > 0$, which implies that $f_e > \delta_2 - \delta_3 \log{(1)}$ (since $f_e \geq 0$). If $\delta_1 < 0$, we need $1 - \exp{(\frac{\delta_2 - f_e}{\delta_3})} < 0$, which implies that $f_e < \delta_2 - \delta_3 \log{(1)}$.
    
    Combining these insights, we can conclude that the function $\sigma(f_e) = \frac{\delta_1}{1 + \exp{(\frac{\delta_2 - f_e}{\delta_3})}} + \delta_4$ becomes strictly concave in the entire interval. So, it is strictly concave for $f_e \geq 0$ if: (i) $\delta_1 > 0, \delta_3 > 0$ and $f_e < \delta_2$; (ii) $\delta_1 < 0$, $\delta_3 > 0$, and $f_e > \delta_2$. If $\delta_3 < 0$, then the relation between $f_e$ and $\delta_2$ is naturally reversed. Note that the parameter $\delta_4$ does not affect the convexity of the function, as it only shifts the function vertically. Therefore, we have derived the necessary conditions that ensure $\sigma '' (f_e)$ is negative for all $f_e$, making $\sigma(f_e)$ strictly concave.
\end{proof}

It follows that it is strictly decreasing, continuous, and (continuously) differentiable with respect to the traffic flow $f_e \in [0, f_e ^ {\max}]$ for any edge $e \in \mathcal{E}$.

Now, we discuss the error characterization of our approximation function. Let us define the error function $\Phi$ as the squared difference between $c_e ^ \text{PT} (f_e)$ and $\sigma(f_e)$, integrated over the interval $[0, \kappa]$:
    \begin{equation}
        \Phi(\delta_1, \delta_2, \delta_3, \delta_4) = \int_{0} ^ {\kappa} \left( c_e ^ \text{PT} (f_e) - \sigma(f_e) \right) ^ 2 \mathrm{d}x.
    \end{equation}
The goal is to minimize $\Phi$ with respect to the parameters $\delta_1, \delta_2, \delta_3$, and $\delta_4$. First, we find the critical points of $\Phi$ by setting its gradient to zero and solving the resulting system of equations: $\nabla \Phi(\delta_1, \delta_2, \delta_3, \delta_4) = 0$. This results in a system of equations involving the partial derivatives of $\Phi$ with respect to each of the parameters, i.e., $\frac{\partial \Phi}{\partial \delta_1} = 0, \quad \frac{\partial \Phi}{\partial \delta_2} = 0, \frac{\partial \Phi}{\partial \delta_3} = 0$, and $\frac{\partial \Phi}{\partial \delta_4} = 0$. To compute these partial derivatives, we need to differentiate the integrand with respect to each parameter and then integrate it again, for example, $\frac{\partial \Phi}{\partial \delta_1} = \int_0 ^ {\kappa} \frac{\partial}{\partial \delta_1} \left( c_e ^ \text{PT} (f_e) - \sigma(f_e) \right) ^ 2 \mathrm{d}x$. This process needs to be repeated for all parameters. However, due to the complexity of the function $c_e ^ \text{PT} (f_e)$ (being a non-linear piecewise function), it is not possible to obtain an explicit analytical expression for these partial derivatives. For our purposes, we rely on numerical optimization techniques to find the exact best-fit parameters that minimize the error function, as these methods can easily handle complex and non-linear optimization.

\begin{theorem}
    The error $\phi(\cdot) = \left( c_e ^ {\text{PT}} (f_e) - \sigma(f_e) \right)$ is upper bounded by $\gamma + \varepsilon$, where $\gamma$ is some real number and $\varepsilon > 0$.
\end{theorem}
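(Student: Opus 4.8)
The plan is to read the claim as a boundedness statement on the working interval $[0,\kappa]$ (with $\kappa < f_e^{\max}$) and to hand back an explicit value for $\gamma$. First I would fix a notational collision: in the BPR model the symbol $c_e^0$ denotes the free-flow travel time, i.e. $c_e(0)$, whereas in \eqref{eqn:valuation_prospect} the same symbol is reused for the value-function reference $c_e(f_e^{\text{CRT}})$; I will keep writing $c_e^0$ for the latter and recall that $c_e$ is continuous and strictly increasing on $[0, f_e^{\max}]$, so $c_e(f_e) < c_e^0 \iff f_e < f_e^{\text{CRT}}$ and $c_e(f_e) > c_e^0 \iff f_e > f_e^{\text{CRT}}$, with $c_e(0) < c_e^0$.

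Next I would establish that $c_e^{\text{PT}}$ is continuous on $[0,\kappa]$. On $[0, f_e^{\text{CRT}})$ it equals the constant $w(\pi_e)\lambda$ times $\tilde c_e(f_e)^{\beta}$, and on $(f_e^{\text{CRT}}, \kappa]$ it equals the constant $-w(1-\pi_e)$ times $\tilde c_e(f_e)^{\beta}$; since $\tilde c_e$ is continuous and nonnegative and $t \mapsto t^{\beta}$ is continuous on $[0,\infty)$ for $\beta \in (0,1)$, $c_e^{\text{PT}}$ is continuous on each open piece. At the gluing point $f_e^{\text{CRT}}$ we have $\tilde c_e(f_e^{\text{CRT}}) = 0$, so both one-sided limits equal $0$; hence $c_e^{\text{PT}}$ extends continuously across $f_e^{\text{CRT}}$ and is continuous on all of $[0,\kappa]$. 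Since $\sigma$ in \eqref{eqn:approx} is a scaled, translated logistic function it is continuous (and, under the sign hypotheses of Lemma \ref{lemma:concavity}, monotone) on $[0,\kappa]$. Therefore $\phi(f_e) = c_e^{\text{PT}}(f_e) - \sigma(f_e)$ is continuous on the compact set $[0,\kappa]$ and attains a finite maximum; set $\gamma := \max_{f_e \in [0,\kappa]} \phi(f_e) \in \mathbb{R}$. Then for every $\varepsilon > 0$ and every $f_e \in [0,\kappa]$ we get $\phi(f_e) \le \gamma < \gamma + \varepsilon$, which is the asserted bound.

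To make $\gamma$ informative rather than vacuous, I would bound each branch by hand. On $[0, f_e^{\text{CRT}}]$, $0 \le \tilde c_e(f_e) = c_e^0 - c_e(f_e) \le c_e^0 - c_e(0)$, so $0 \le c_e^{\text{PT}}(f_e) \le w(\pi_e)\lambda\,(c_e^0 - c_e(0))^{\beta}$; on $[f_e^{\text{CRT}}, \kappa]$, $-w(1-\pi_e)\,(c_e(\kappa) - c_e^0)^{\beta} \le c_e^{\text{PT}}(f_e) \le 0$. Hence $\max_{[0,\kappa]} c_e^{\text{PT}} \le M := w(\pi_e)\lambda\,(c_e^0 - c_e(0))^{\beta}$, and since $\sigma \ge m := \min\{\sigma(0), \sigma(\kappa)\}$ on $[0,\kappa]$ by monotonicity, one may take $\gamma = M - m$, a concrete number depending only on the BPR constants, $\beta$, $\lambda$, $\pi_e$, and the fitted $\delta$'s; the free $\varepsilon > 0$ then serves as slack and may be chosen to absorb the gap between this crude estimate and the true least-squares residual $\Phi$ discussed above.

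The hard part here is not analytic difficulty but guarding against a vacuous statement: "$\phi \le \gamma + \varepsilon$ for some real $\gamma$ and some $\varepsilon > 0$" is implied by mere finiteness of $\phi$, so the actual content lives in (a) confirming that $c_e^{\text{PT}}$ is genuinely continuous — in particular that the exponent $\beta \in (0,1)$ produces at worst a cusp, not a jump, at $f_e^{\text{CRT}}$ — and (b) exhibiting the explicit $\gamma = M - m$ so the bound carries information. A secondary pitfall is the overloaded $c_e^0$; pinning it down at the outset, as above, avoids a circular reading of the reference point.
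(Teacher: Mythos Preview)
Your argument is correct. You establish continuity of $c_e^{\text{PT}}$ on $[0,\kappa]$ (in particular handling the gluing at $f_e^{\text{CRT}}$ correctly: both one-sided limits vanish because $\tilde c_e(f_e^{\text{CRT}})=0$ and $t\mapsto t^{\beta}$ is continuous at $0$), observe that $\sigma$ is smooth, and then invoke compactness of $[0,\kappa]$ to conclude that $\phi$ attains a finite maximum $\gamma$, whence $\phi\le\gamma<\gamma+\varepsilon$. Your explicit envelope $\gamma=M-m$ with $M=w(\pi_e)\lambda\,(c_e^0-c_e(0))^{\beta}$ and $m=\min\{\sigma(0),\sigma(\kappa)\}$ is a legitimate, parameter-dependent sharpening, and your remark that the bare statement is close to vacuous without such a concrete $\gamma$ is well taken.

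The paper's own proof proceeds quite differently. Rather than a global compactness argument, it fixes a specific numerical instance ($\beta=0.5$, $\lambda=2$, $f_e^{\text{CRT}}=1$, together with particular BPR constants), argues that the worst discrepancy occurs in a neighborhood $(1,1+\varepsilon)$ of the reference point, and then Taylor-expands both $c_e^{\text{PT}}$ and $\sigma$ at $f_e=1+\varepsilon$. From the expansion it extracts the dominant terms, uses $a+b\le\max\{a,b\}$ for $a<0<b$, passes to the limit $\varepsilon\to0$, and reads off the concrete value $\gamma=\delta_1/\bigl(1+\exp((\delta_2-1)/\delta_3)\bigr)$. Thus the paper trades generality for an explicit closed-form $\gamma$ tied to the fitted logistic parameters and localized at the kink, whereas your route is parameter-agnostic, covers the whole interval uniformly, and is analytically cleaner; the price is that your $\gamma=M-m$ is a coarser envelope than the paper's pointwise value near $f_e^{\text{CRT}}$. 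Your disambiguation of the overloaded symbol $c_e^0$ is also a genuine clarification the paper leaves implicit.
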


\begin{proof}
    For the purposes of this proof we assume that $\beta = 0.5$, $\lambda = 2$ and $c_e ^ 0 = 13$ and $f_e ^ {\text{CRT}} = 1$ and $c_e ^ 0 = 14.95$. We substitute now the known equations to get $\phi(\delta_1, \delta_2, \delta_3, \delta_4) = - w(1 - \pi_e) \cdot [ \tilde{c}_e(f_e \, | \, c_e(f_e) > c_e ^ 0) ] ^ {\beta} - \frac{\delta_1}{1 + \exp{ \left(\frac{\delta_2 - f_e}{\delta_3} \right)}} - \delta_4$. Using a straightforward computation of the second-order derivative, we can get the inflection point of $\phi$, which will lie in $(1, 1 + \varepsilon)$. This means that it is sufficient for us to compute $\phi$ at $f_e = 1$ and focus on $\phi$ for $f_e > 1$. Since $\sigma$ is smooth and strictly concave in that interval, it approximates the worst $c_e ^ {\text{PT}}$ around the inflection point. So, we have the following $\phi(\delta_1, \delta_2, \delta_3, \delta_4) = - w(1 - \pi_e) (c_e(f_e) - c_e ^ 0) ^ {\beta} - \frac{\delta_1}{1 + \exp{ \left(\frac{\delta_2 - f_e}{\delta_3} \right)}} - \delta_4$. This expression simplifies to
        \begin{multline}
            \phi = - w(1 - \pi_e) \left[ 13 \Bigg( 1 + \frac{3}{20} (f_e) ^ 4 \Bigg) - \frac{299}{20} \right] ^ {\beta} \\
            - \frac{\delta_1}{1 + \exp{ \left(\frac{\delta_2 - f_e}{\delta_3} \right)}} - \delta_4,
        \end{multline}
    where we have $\delta_1 < 0$ and $\delta_2, \delta_3, \delta_4 > 0$, and $\delta_2 > f_e$. Since $w(1 - \pi_e)$ is only a positive parameter constant, it is negligible, so we drop it from our analysis. The first component simplifies to $\left[ \frac{39}{20} \left( 1 - (f_e) ^ 4 \right) \right] ^ {\beta}$, which is negative when we evaluate near the inflection point. Next, it follows that the second component is positive for small values of $\delta_2$ and $\delta_3$. We use the Taylor series expansion evaluated at $f_e = 1 + \varepsilon$, where $\varepsilon$ is a small positive number to get 
        \begin{multline}
            - \left[ \frac{39}{20} \left( (f_e) ^ 4 - 1 \right) \right] ^ {\beta} = \\
            - \sqrt{\frac{39}{5}} \sqrt{\varepsilon} - \frac{3}{4} \frac{\sqrt{39}}{5} \varepsilon ^ {\frac{3}{2}} - \frac{7}{32} \frac{\sqrt{39}}{5} \varepsilon ^ {\frac{5}{2}} + O(\varepsilon ^ {\frac{7}{2}}),
        \end{multline}
    which is clearly negative. For the second component, we use the Taylor series expansion at $f_e = 1 + \varepsilon$ to get
        \begin{multline}
            \frac{\delta_1}{1 + \exp{ \left(\frac{\delta_2 - (1 + \varepsilon)}{\delta_3} \right)}} = \frac{\delta_1}{1 + \exp{\left(\frac{\delta_2 - 1}{\delta_3}\right)}} \\
            + \frac{\delta_1 \exp{\left(\frac{\delta_2 - 1}{\delta_3}\right)}}{\delta_3 \left(1 + \exp{\left(\frac{\delta_2 - 1}{\delta_3}\right)}\right)^2} \cdot \varepsilon \\
            \frac{1}{2} \frac{\delta_1 \exp{\left(\frac{1}{\delta_3} + \frac{\delta_2}{\delta_3}\right)} \left(\exp{\left(\frac{1}{\delta_3}\right)} - \exp{\left(\frac{\delta_2}{\delta_3}\right)}\right)}{\delta_3^2 \left(\exp{\left(\frac{1}{\delta_3}\right)} + \exp{\left(\frac{\delta_2}{\delta_3}\right)}\right)^3} \cdot \varepsilon^2 + O(\varepsilon^3).
        \end{multline}
    We combine the expressions for the first and second components. Next, we have
        \begin{multline}\label{eqn:phi_expression}
            \phi = - \sqrt{\frac{39}{5}} \sqrt{\varepsilon} - \frac{3}{4} \frac{\sqrt{39}}{5} \varepsilon ^ {\frac{3}{2}} - \frac{7}{32} \frac{\sqrt{39}}{5} \varepsilon ^ {\frac{5}{2}} + O(\varepsilon ^ {\frac{7}{2}}) \\
            + \frac{\delta_1}{1 + \exp{\left(\frac{\delta_2 - 1}{\delta_3}\right)}} + \frac{\delta_1 \exp{\left(\frac{\delta_2 - 1}{\delta_3}\right)}}{\delta_3 \left(1 + \exp{\left(\frac{\delta_2 - 1}{\delta_3}\right)}\right)^2} \cdot \varepsilon \\
            + \frac{1}{2} \frac{\delta_1 \exp{\left(\frac{1}{\delta_3} + \frac{\delta_2}{\delta_3}\right)} \left(\exp{\left(\frac{1}{\delta_3}\right)} - \exp{\left(\frac{\delta_2}{\delta_3}\right)}\right)}{\delta_3^2 \left(\exp{\left(\frac{1}{\delta_3}\right)} + \exp{\left(\frac{\delta_2}{\delta_3}\right)}\right)^3} \cdot \varepsilon^2 + O(\varepsilon^3).
        \end{multline}
    We want to find an upper bound for the error, which means we need to show that \eqref{eqn:phi_expression} is less than or equal to $\gamma + \varepsilon$ for some $\gamma \in \mathbb{R}$. Note that for any $a, b \in \mathbb{R}$ with $a < 0$ and $b > 0$, it is always true that $a + b \leq \max\{a, b\}$. So,
        \begin{multline}
            \phi \leq \max \bigg\{ - \sqrt{\frac{39}{5}} \sqrt{\varepsilon} - \frac{3}{4} \frac{\sqrt{39}}{5} \varepsilon ^ {\frac{3}{2}} - \frac{7}{32} \frac{\sqrt{39}}{5} \varepsilon ^ {\frac{5}{2}} + O(\varepsilon ^ {\frac{7}{2}}), \\
            \frac{\delta_1}{1 + \exp{\left(\frac{\delta_2 - 1}{\delta_3}\right)}} + \frac{\delta_1 \exp{\left(\frac{\delta_2 - 1}{\delta_3}\right)}}{\delta_3 \left(1 + \exp{\left(\frac{\delta_2 - 1}{\delta_3}\right)}\right)^2} \cdot \varepsilon \\
            + \frac{1}{2} \frac{\delta_1 \exp{\left(\frac{1}{\delta_3} + \frac{\delta_2}{\delta_3}\right)} \left(\exp{\left(\frac{1}{\delta_3}\right)} - \exp{\left(\frac{\delta_2}{\delta_3}\right)}\right)}{\delta_3^2 \left(\exp{\left(\frac{1}{\delta_3}\right)} + \exp{\left(\frac{\delta_2}{\delta_3}\right)}\right)^3} \cdot \varepsilon^2 + O(\varepsilon^3) \bigg\}.\nonumber
        \end{multline}
    As $\varepsilon$ is positively small, we take the limit as $\varepsilon \to 0$. We note that the term $- \sqrt{\frac{39}{5}} \sqrt{\varepsilon}$ dominates as $\varepsilon \to 0$, and so the first component approaches $-\infty$ as $\varepsilon \to 0$. For the second component, the term $\frac{\delta_1}{1 + \exp{\left(\frac{\delta_2 - 1}{\delta_3}\right)}}$ dominates as $\varepsilon \to 0$, and since $\delta_1 < 0$ and $\delta_2, \delta_3 > 0$, the second component is positive. Hence, we can write
        \begin{equation}
            \lim_{\varepsilon \to 0} \phi \leq \lim_{\varepsilon \to 0} \max \left\{ - \sqrt{\frac{39}{5}} \sqrt{\varepsilon}, \frac{\delta_1}{1 + \exp{\left(\frac{\delta_2 - 1}{\delta_3}\right)}} \right\}.
        \end{equation}
    As $\varepsilon \to 0$, we have $- \sqrt{\frac{39}{5}} \sqrt{\varepsilon} \to -\infty$, hence
        \begin{align}
            \lim_{\varepsilon \to 0} \phi & \leq \lim_{\varepsilon \to 0} \max \left\{- \sqrt{\frac{39}{5}} \sqrt{\varepsilon}, \frac{\delta_1}{1 + \exp{\left(\frac{\delta_2 - 1}{\delta_3}\right)}} \right\} \notag \\
            & = \frac{\delta_1}{1 + \exp{\left(\frac{\delta_2 - 1}{\delta_3}\right)}}.
        \end{align}
    Now, let $\gamma = \frac{\delta_1}{1 + \exp{\left(\frac{\delta_2 - 1}{\delta_3}\right)}}$. Since the second component is positive, we have $\gamma > 0$, thus $\phi \le \gamma + \varepsilon$. Therefore, the error $\phi$ is upper bounded by $\gamma + \varepsilon$, where $\gamma \in \mathbb{R}$ and $\varepsilon > 0$.
\end{proof}

\begin{theorem}
    The game $\mathcal{M}$ admits at least one NE.
\end{theorem}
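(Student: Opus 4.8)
The plan is to obtain existence from a standard fixed-point theorem for games with continuous action spaces --- either the Debreu--Glicksberg--Fan theorem or Rosen's theorem for concave $n$-person games \cite{Rosen1965} --- after checking its hypotheses for $\mathcal{M}$. The decisive observation is that the object each player optimizes is the perceived-gain function $c_i^{\text{PT}}(x)=\sum_{r_i\in\mathcal{R}}\sum_{e\in r_i}\sigma(f_e)$, and, as noted right after \eqref{eqn:piecewis}, players \emph{maximize} it; thus $x^*$ is a NE exactly when no unilateral deviation strictly increases $c_i^{\text{PT}}$, and the existence theorem to invoke is the one for quasiconcave payoff maximization.

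First I would collect the structural ingredients. By Lemma~\ref{lemma:strategy_set} each $\mathcal{X}_i$ is non-empty, compact, and convex, hence so is $\mathcal{X}=\prod_{i\in\mathcal{I}}\mathcal{X}_i$. Since the flow a feasible profile can place on any edge is at most $\sum_{i\in\mathcal{I}}\bar{x}_i$, I would fix the interval bound $\kappa$ of Lemma~\ref{lemma:concavity} to exceed this quantity, so that every edge flow $f_e(x)$ arising from a feasible $x$ lies in $[0,\kappa]$. On that interval $\sigma$ is continuous, and for fixed $x_{-i}$ the map $x_i\mapsto f_e(x_i,x_{-i})$ is affine; composing and summing over edges and routes shows $c_i^{\text{PT}}$ is jointly continuous on $\mathcal{X}$. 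Moreover, by Lemma~\ref{lemma:concavity} the function $\sigma$ is concave on $[0,\kappa]$, so each $x_i\mapsto\sigma(f_e(x_i,x_{-i}))$ is concave (concave function composed with an affine map) and a finite sum of these is concave; hence $x_i\mapsto c_i^{\text{PT}}(x_i,x_{-i})$ is concave, a fortiori quasiconcave, for every fixed $x_{-i}$. With non-empty compact convex action sets, continuous payoffs, and own-action quasiconcavity in hand, the Debreu--Glicksberg--Fan theorem yields a pure-strategy NE. Equivalently, I could run Kakutani directly: the best-response correspondence $B_i(x_{-i})=\arg\max_{x_i\in\mathcal{X}_i}c_i^{\text{PT}}(x_i,x_{-i})$ is non-empty by Weierstrass, convex-valued by the concavity just shown, and upper hemicontinuous with closed graph by Berge's maximum theorem; applying Kakutani to $x\mapsto\prod_{i\in\mathcal{I}}B_i(x_{-i})$ on the compact convex set $\mathcal{X}$ produces a fixed point $x^*$, which is by construction a NE of $\mathcal{M}$ and in particular satisfies the variational inequality $\langle\sum_{r_i\in\mathcal{R}}\sum_{e\in r_i}\sigma'(f_e^*),\,x_i-x_i^*\rangle\ge 0$ of the preceding lemma.

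I expect the main obstacle to be not the fixed-point machinery itself but two modeling points that must be pinned down before the argument closes. The first is ensuring that the concavity certified by Lemma~\ref{lemma:concavity} genuinely holds over the entire range of edge flows reachable under feasible profiles; this is what forces the choice of $\kappa$ above and requires the $\delta$-parameters to lie in one of the sign-consistent regimes of that lemma, so that $\sigma$ is concave (not convex) on the flows that actually occur. The second is committing to the maximization reading of $c_i^{\text{PT}}$: if one instead insisted on the cost-minimization form of the NE definition with a concave integrand, best responses would concentrate at vertices of $\mathcal{X}_i$, fail to be convex-valued, and the first-order condition would be only necessary --- so the gain-maximization convention is precisely what makes the existence proof go through cleanly.
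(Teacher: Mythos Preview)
Your proposal is correct and follows essentially the same route as the paper: both combine Lemma~\ref{lemma:strategy_set} (non-empty, compact, convex strategy sets) with Lemma~\ref{lemma:concavity} (concavity of $\sigma$) and then apply a fixed-point theorem to the best-response map. The only difference is cosmetic---the paper exploits \emph{strict} concavity to make each best response single-valued and invokes Brouwer, whereas you use Kakutani/Debreu--Glicksberg--Fan on the correspondence; your write-up is in fact more careful than the paper's about choosing $\kappa$ large enough to cover all feasible edge flows and about committing to the gain-maximization reading of $c_i^{\text{PT}}$.
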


\begin{proof}
    We formally prove the existence of an NE in the prospect-theoretic routing game using Brouwer's fixed point theorem. Recall that for any player $i$, $c_i ^ {\text{PT}} (x) = \sum_{r_i \in \mathcal{R}} \sum_{e \in r_i} \sigma(f_e(x))$, where $\sigma$ is our smooth and monotonic approximation function. We define the best-response correspondence for each player $i$ as: $b_i(x_{-i}) = \arg\max_{x_i} c_i^{\text{PT}}(x)$. Smoothness in the approximation function $\sigma$ implies that it is continuous and has continuous derivatives. This implies that we can estimate the utility function $c_i ^ {\text{PT}}(x)$ continuously with respect to the traffic vector $x$. To show that the best-response correspondence $b_i(x_{-i})$ is continuous, we need the $\arg\max$ operator to be continuous. Since the set of maximizers is compact, which actually follows from the compactness of the strategy space by Lemma \ref{lemma:strategy_set}. By Lemma \ref{lemma:concavity}, we have that $\sigma$ is concave on a specific interval $[0, \kappa]$. This implies that we can estimate the utility function $c_i ^ {\text{PT}}(x)$ within the interval $[0, \kappa]$ pointwise in a strictly decreasing and strictly concave curve with respect to $x_i$ for any player $i \in \mathcal{I}$. However, a strictly concave function has at most one unique maximum, which ensures the single-validness of the best-response correspondence $b_i(x_{-i})$. We now define the combined best-response correspondence $B(x) = (b_1(x_{-1}), b_2(x_{-2}), \dots, b_n(x_{-n}))$. Since each $b_i(x_{-i})$ is continuous, $B(x)$ is also continuous, and thus it maps the strategy space to itself. Hence, now we can apply Brouwer's fixed point theorem, which guarantees that there exists a fixed point $x^* = B(x^*)$; the result then follows.
\end{proof}

\addtolength{\textheight}{-3cm}   

\section{CONCLUSIONS AND FUTURE WORK}\label{SECTION_4:Conclusions}


In this paper, we presented our mobility game incorporating an atomic splittable routing game with prospect theory to study travel behavior in mobility systems. We modeled the overestimation/underestimation of probabilities using Prelec's probability weighting function, and we considered the traffic uncertainties and travelers' perception of gains/losses in travel time using a prospect-theoretic S-shaped value function. We proposed an approximation function to address the non-linear and piecewise nature of the prospect-theoretic cost functions and showed that at least one NE exists. Finally, we also derived an upper bound for the error.


In future research, we can explore how to analyze a convex-concave piecewise non-linear optimization problem using optimization techniques, such as sequential convex programming or cutting plane methods. Developing such an optimization framework can enhance our ability to predict travel decisions in mobility systems under prospect theory. Another direction is to incorporate prospect theory and a taxation mechanism and study using artificial intelligence how we can incentivize prospect-theoretic travelers and the trade-offs of efficiency in the mobility systems \cite{Chremos2020MechanismDesign,chremos2022Equity,bharadiya2023artificial}.

\bibliographystyle{IEEEtran}
\bibliography{References/IDS_Publications_03122024,References/MyReferences}

\end{document}